\newcommand{\spec}[1]{\texttt{#1}}
\renewcommand{\~}[0]{\texttildelow}
\newtheorem{notation}{Notation}[section]
\newtheorem{defn}{Definition}[section]
\newtheorem{thm}{Theorem}[section]
\newtheorem{exmp}{Example}[section]
\newenvironment{proof}[1][Proof]{\begin{trivlist}
 \item[\hskip \labelsep {\bfseries #1}]}{\end{trivlist}}
\title{Compact Representation of Photosynthesis Dynamics by Rule-based Models (Full Version)}
\author{L. Brim, J. Ni\v{z}nan, D. \v{S}afr\'anek
\thanks{The work has been supported by the EC OP project No. CZ.1.07/2.3.00/20.0256.}
\institute{Faculty of Informatics, Masaryk University\\
Brno, Czech Republic
}
\email{safranek@fi.muni.cz}
}
\begin{document}
\maketitle

\begin{abstract}

Traditional mathematical models of photosynthesis are based on mass action kinetics of light reactions. This approach requires the modeller to enumerate all the possible state combinations of the modelled chemical species. This leads to combinatorial explosion in the number of reactions although the structure of the model could be expressed more compactly. We explore the use of rule-based modelling, in particular, a simplified variant of Kappa, to compactly capture and automatically reduce existing mathematical models of photosynthesis. Finally, the reduction procedure is implemented in BioNetGen language and demonstrated on several ODE models of photosynthesis processes. 
This is an extended version of the paper published in proceedings of 5th International Workshop on Static Analysis and Systems Biology (SASB) 2014. 
\end{abstract}




\section{Introduction}

Photosynthesis is one of the most important biophysical processes driving life on Earth. Most life forms, including humans, depend on photosynthesis that transforms energy of solar radiation into energy-rich organic matter, releases oxygen that we breathe, and removes excess carbon dioxide from the atmosphere that would threaten the Earth's energy balance. Adding to the relevance of photosynthesis, significant expectations emerged lately in connection with potential human interventions in the global carbon cycle -- among the considered alternatives are the higher generation biofuels~\cite{BrennanOwende2010} or biomineralization by point-source carbon capture~\cite{JanssonNorthen2010}.

Current coarse-grained mathematical models of photosynthesis~\cite{ephotobook} cover the known parts of the entire process. They build up the light reactions dynamics from simplified interactions on and inbetween complicated protein complexes involved in the transfer of the energy from light into the cell. Many different local modifications at these protein structures are traversed after reception of the photon. To capture the process mechanistically, many elementary chemical reactions connect together to form the model. Each effective structure combination has to be enumerated in order to assign the appropriate kinetic laws. This inevitable expansion then leads to combinatorial explosion in the number of possible complexes. 


In~\cite{ephotosynthesis} we have developed an online repository for mathematical models of photosynthesis. That effort has opened many questions regarding the differing levels of available models and the problem of their formal representation  in a single suitably expressive formalism. After several years of interactions with modellers targeting photosynthesis, we now attempt to move towards practical applications of state-of-the-art formal methods in that field.  


Rule-based modelling~\cite{kappa_formal,innovations,rbm2014} is an approach that has been developed to tackle primarily the complexity of cell signalling systems where combinatorial explosion comes from configurations of phosphate bindings to specific sites of a protein. In particular, it allows us to compactly represent complicated models that would be tedious to specify using traditional reaction-based methods~\cite{agile,cyanoclock}.
The interactions between proteins are represented using rules at the level of functional components. In photosynthesis, there occurs a number of specific protein complex modifications that are in abstract essence similar to phosphorylation though crucially different at the side of physics.
The two most well-known representatives of rule-based modelling languages are Kappa~\cite{kappa_formal} and BioNetGen Language~\cite{bngl}.

On the theoretical side, we contribute to the increasing set of algebraic-based modelling efforts by employing a simplified version of Kappa calculus to compact a set of domain-specific models coming from biophysics of photosynthesis. We do not construct the models from the scratch but we rather take several existing kinetic models of photosynthesis-related processes and reformulate them in the algebraic framework. At the level of compositional representation, we formulate syntactic reductions of the models that preserve behavioural equivalence. 

On the practical side, we employ BioNetGen language (BNGL) and related tools~\cite{bngl,BNG} to implement the models. We contribute by automatising our reduction procedure in Python. By applying reductions to the considered models we show the power of process-algebraic framework to compactly represent combinatorially exploding systems of light reactions. 

The paper shows the importance of process-algebraic description for the domain of photosynthesis. To the best of our knowledge, the application of such techniques in the field of systems biology of photosynthesis is still at its beginning and we believe our contribution is useful to help to establish rule-based modelling in the domain.

\subsection{Related Work}

There are many applications of rule-based modelling available (see~\cite{rbm2014} for an overview). However, applications to photosynthesis are very rare.
In~\cite{rulebasedphotosynth}, the authors provide a model of chlorophyll $a$ fluorescence induction kinetics that is simulated in the rule-based framework by means of Monte Carlo simulation. The work is unique in the sense it pioneers rule-based approach for photosynthesis models. The contribution brings a stochastic simulation algorithm that reflects the differing context where the context-free rules appear. This is important, since the quantitative rates of electron transfers are modulated by aggregation of several modifications of the protein complex, i.e., the photosystem. The considered model is comparable to Laz\'ar model~\cite{lazar2009approaches} we use for our case study.


In~\cite{model_reductions}, the problem of combinatorial complexity in models where the quantitative semantics cannot be generated is addressed.
Several frameworks for abstracting the models at the level of semantics have been developed for Kappa.

\section{Background}

We define simplified Kappa
using a process-like notation as is presented in~\cite{danos2008abstract}, syntax and the notions of structural equivalence and matching are entirely take from~\cite{danos2008abstract}:

\begin{tabular}{ l l l l l}
    expression & $E : := \emptyset ~|~ a, E$            &~~~~~~~~& site              & $s : := n^{\lambda}_{\iota}$ \\
    agent      & $a : := N(\sigma)$                     &   & site name         & $n : := x \in \mathcal{S}$   \\
    agent name & $N : := A \in \mathcal{A}$             &   & internal state    & $\iota : := \epsilon ~|~ m \in \mathbb{V}$\\
    interface  & $\sigma : := \emptyset ~|~ s, \sigma$  &   & binding state     & $\lambda : := \epsilon ~|~ i \in \mathbb{N}$ \\
\end{tabular}

where $\mathcal{A}$ is a finite set of agent names, $\mathcal{S}$ is a finite set of site names, 
$\mathbb{V}$~is a finite set of values representing modified states of the sites. 
An agent is denoted by its name and its interface.
Interface consists of a sequence of sites.
$x_{\iota}^{\lambda}$ denotes a site $x$ with internal state $\iota$ and binding state $\lambda$.
If the binding state is $\epsilon$ then the site is free, otherwise it is bound.
By convention, when a binding or internal site is not specified, $\epsilon$ is considered.

Note that full Kappa is richer.
It allows a binding state meaning a \emph{free or bound site}, denoted by a question mark.
We also omit rates from the rules. 

\begin{defn}
An expression is \emph{well-formed} if a site name occurs only once in an interface and if 
each binding state ($\neq\epsilon$) present in the expression occurs exactly twice.
The \emph{set of all well-formed expressions} is denoted as $\mathcal{E}$.
The set of all well-formed expressions that can be generated from the literal $a$ is called the \emph{set of all well-formed agents} and is denoted as $\mathcal{E}_a$.
Similarly, $\mathcal{E}_{\sigma}$ denotes the \emph{set of all well-formed interfaces} and
$\mathcal{E}_{s}$ the \emph{set of all well-formed sites}.
\end{defn}

Next, we define some notations we use throughout the text.
\begin{notation}
    \emph{expressions} $E,~E'~\in~\mathcal{E}$; 
\emph{agents} $a,~a'~\in~\mathcal{E}_a$; \emph{agent name} $A~\in~\mathcal{A}$; 
\emph{interfaces} $\sigma,~\sigma'~\in~\mathcal{E}_\sigma$; sites $s,~s'~\in~\mathcal{E}_s$; \emph{site name} $x~\in~\mathcal{S}$;
\emph{internal state} $\iota~\in~\{\epsilon\}~\cup~\mathbb{V}$; \emph{specific internal state} $m~\in~\mathbb{V}$;
\emph{binding state} $\lambda~\in~\{\epsilon\}~\cup~\mathbb{N}$;
\end{notation}

Next, we provide inductive definitions of some useful mappings.

\begin{defn}
    \emph{Agent name} is a mapping $\mathsf{name}: \mathcal{E}_a \rightarrow \mathcal{A}$ defined as\linebreak $\mathsf{name}(A(\sigma)) = A$.

    We define \emph{agent sites} as a mapping $\mathsf{sites}: \mathcal{E}_a \rightarrow 2^\mathcal{S}$ such that
    $\mathsf{sites}(A()) = \emptyset$, 
    $\mathsf{sites}(A(x_{\iota}^{\lambda})) = \{x\}$, and
    $\mathsf{sites}(A(s, \sigma)) = \mathsf{sites}(A(s)) \cup \mathsf{sites}(A(\sigma))$.

    \emph{Agent internal state} is a mapping $\mathsf{state}: \mathcal{E}_a \rightarrow (\mathbb{V} \cup \{\epsilon\})^\mathcal{S}$ defined as
    $\mathsf{state}(A()) = \emptyset$, 
    $\mathsf{state}(A(x_{\iota}^{\lambda})) = \{(x, \iota)\}$, and
    $\mathsf{state}(A(s, \sigma)) = \mathsf{state}(A(s)) \cup \mathsf{state}(A(\sigma))$.
\end{defn}

\begin{defn} 
    \emph{Structural equivalence} $\equiv \subseteq \mathcal{E} \times \mathcal{E}$ is defined as
    a relation satisfying the following properties:
    \begin{enumerate}
        \item Reflexivity:\hfill
            $ E \equiv E $
        \item The order of sites in interfaces does not matter:\\\hspace*{1mm}\hfill
            $E, A(\sigma, s, s', \sigma'), E' \equiv E, A(\sigma, s', s, \sigma'), E' $
        \item The order of agents in an expression does not matter:\\\hspace*{1mm}\hfill
            $ E, a, a', E' \equiv E, a', a, E'$
        \item Binding states can be injectively renamed:\hfill
            ${E[i/j] \equiv E}$\\\hspace*{1mm}\hfill where $i, j \in \mathbb{N}$ and $i$ does not occur in $E$.
    \end{enumerate}
    \emph{Solution} $[E] \in 2^{\mathcal{E}}$ denotes the equivalence class of $E$ in $\equiv$.
    $\mathcal{L}$ is a set of all solutions.
\end{defn}

\begin{defn} 
A \emph{rule} is a pair of expressions $E_l$, $E_r$ (usually written as $E_l \rightarrow E_r$). 
The set of all rules is denoted as $\mathcal{R}$.
\end{defn}
The left hand side $E_l$ of the rule describes the solution taking part in the reaction and
the right hand side $E_r$ describes the effects of the rule.
The rule can be either a binding rule or a modification rule.
A binding (unbinding) rule binds two free sites together (or unbinds two bound sites).
A modification rule modifies some internal state~\cite{danos2008abstract}.

\setlength{\tabcolsep}{12pt}
\begin{defn} 
    \label{matching-def}
    \emph{Matching} is a relation denoted as $\models \subseteq \mathcal{E} \times \mathcal{E}$ and defined inductively in the left column below.
    \emph{Replacement} is a function $\mathcal{E} \times \mathcal{E}\rightarrow\mathcal{E}$ defined in the right column. 
    \begin{center}
\begin{tabular}{ c c }
    $n_{\iota}^{\lambda} \models n_{\iota}^{\lambda}$ & 
        $n_{\iota}^{\lambda}[n_{\iota_r}^{\lambda_r}] = n_{\iota_r}^{\lambda_r}$  \\[0.15cm]
    $n_{\iota}^{\lambda} \models n^{\lambda}$ & 
        $n_{\iota}^{\lambda}[n^{\lambda_r}] = n_{\iota}^{\lambda_r}$  \\[0.15cm]
    $\sigma \models \emptyset$ & $\sigma[\emptyset] = \sigma$ \\[0.15cm]
    $\frac{\displaystyle s \models s_l \quad \sigma \models \sigma_l}{\displaystyle s,\sigma \models s_l,\sigma_l}$ &
        $s,\sigma[s_r,\sigma_r] = s[s_r], \sigma[\sigma_r]$ \\ [0.3cm]
    $\frac{\displaystyle \sigma \models \sigma_l}{\displaystyle N(\sigma) \models N(\sigma_l)}$ &
        $N(\sigma)[N(\sigma_r)] = N(\sigma[\sigma_r])$ \\ [0.3cm]
    $E \models \emptyset$ & $E[\emptyset] = E$ \\ [0.15cm]
    $\frac{\displaystyle a \models a_l \quad E \models E_l}{\displaystyle a,E \models a_l,E_l}$ &
        $(a, E)[a_r,E_r] = a[a_r], E[E_r]$ \\
\end{tabular}
    \end{center}
\end{defn}

A replacement can be applied only if the corresponding matching is satisfied.

In order to apply a rule $E_l \rightarrow E_r$ to a solution $[E]$ the expression $E$ representing the solution must 
first be reordered to an equivalent expression $E'$ that matches $E_l$ (according to the definition
of matching stated above).
$E'$~is then replaced with $E'[E_r]$ (also defined above).

\begin{defn}
    \emph{Rule application} is a mapping $\tau: \mathcal{L} \times \mathcal{R} \to \mathcal{L}$ such that
    $ \tau([E], (E_l, E_r)) = [E'[E_r]] \text{ whenever }$ $\exists E' \in [E]. E' \models E_l$.
\end{defn}
Rules yield a \emph{transition system} between solutions containing an edge $[E] \rightarrow_{E_l, E_r} [E'[E_r]]$ 
whenever $\exists E' \in [E]. E' \models E_l$.

\begin{defn} 
    An \emph{agent signature} $(\Sigma, I)$ is a pair of mappings $\Sigma: \mathcal{A} \rightarrow 2^{\mathcal{S}}$ and 
    $I: \mathcal{A} \times \mathcal{S} \rightarrow 2^{\mathbb{V}}$.
\end{defn}
Informally, $\Sigma$ restricts for each agent name $A \in \mathcal{A}$ the set of site names that can occur in an agent with name $A$. 
And $I$ restricts the set of internal states a particular site can attain.

\begin{defn}
    $E$ satisfies agent signature $(\Sigma,I)$, denoted $(\Sigma,I)\vdash E$, if $E$~satisfies one of the following conditions:
    \begin{enumerate}
        \item $E \equiv \emptyset$
        \item $E \equiv A()$ and $A \in dom(\Sigma)$
        \item $E \equiv A(x_{\epsilon}^{\lambda})$ and $x \in \Sigma(A)$
        \item $E \equiv A(x_m^{\lambda})$ and $x \in \Sigma(A)$ and $m \in I(A, x)$
        \item $E \equiv A(s, \sigma)$ where $(\Sigma, I) \vdash A(s)$ and $(\Sigma, I) \vdash A(\sigma)$
        \item $E \equiv E_l, A(\sigma)$ where $(\Sigma, I) \vdash E_l$ and $(\Sigma, I) \vdash A(\sigma)$ 
    \end{enumerate}

    If $r = (E_l, E_r) \in \mathcal{R}$ and $(\Sigma, I) \vdash E_l$ and $(\Sigma, I) \vdash E_r$ then $(\Sigma, I) \vdash r$.

    If $R \subseteq \mathcal{R}$ and $\forall r \in R. (\Sigma, I) \vdash r$ then $(\Sigma, I) \vdash R$.
\end{defn}

\begin{defn}
    An agent $a$ is \emph{complete} with respect to signature $(\Sigma,I)$, denoted $(\Sigma,I)\models a$, if 
    $\mathsf{sites}(a)=\Sigma(\mathsf{name}(a)) \land \forall x\in\mathsf{sites}(a). \mathsf{state}(a)(x)\in I(\mathsf{name}(a),x)$.

    An expression $E$ is \emph{complete} with respect to signature $(\Sigma, I)$, denoted $(\Sigma, I) \models E$, if it satisfies one of the following conditions:
    \begin{enumerate}
        \item $E \equiv \emptyset$
        \item $E \equiv a, E'$ where $a \in \mathcal{E}_a, E' \in \mathcal{E}$ and $(\Sigma, I) \models a$ and $(\Sigma, I) \models E'$ 
    \end{enumerate}

    $\mathcal{E}_{(\Sigma, I)} = \{E \in \mathcal{E} | (\Sigma, I) \models E \}$ is a set of all expressions that are complete with respect to signature $(\Sigma, I)$ .
\end{defn}

\begin{defn}
    A \emph{rule-based model} $\mathcal{M}$ is a tuple $(\Sigma,I,R)$ that satisfies the condition $(\Sigma,I)\vdash R$.
    We use the notation $\mathsf{Signature}(\mathcal{M}) = (\Sigma, I)$, 
        $\mathsf{Rules}(\mathcal{M}) = R$,
        $\mathcal{M} \vdash E \iff (\Sigma, I) \vdash E$ for $E \in \mathcal{E}$,
        $\mathcal{M} \models E \iff (\Sigma, I) \models E$ for $E \in \mathcal{E}$, and
        $\mathcal{E}_{\mathcal{M}} = \mathcal{E}_{(\Sigma, I)}$.
\end{defn}

\begin{defn}
    An \emph{initialised model} $M$ is a pair $(\mathcal{M}, E_i)$ where $\mathcal{M}$ is a rule-based model and
    $E_i$ is an expression representing the \emph{initial solution} such that $\mathcal{M} \models E_i$.
\end{defn}

\begin{defn}
    \label{rn-def}
    A \emph{state space} of an initialised model $M = (\mathcal{M}, E_i)$ is a pair $(\mathsf{Solutions}(M) \subseteq \mathcal{L}, \mathsf{Reactions}(M)$ $\subseteq \mathcal{L} \times \mathcal{L})$ defined inductively as follows:
    \begin{enumerate}
        \item $[E_i] \in \mathsf{Solutions}(M)$
        \item $[E] \in \mathsf{Solutions}(M)$ and $\exists r \in \mathsf{Rules}(\mathcal{M}). \tau([E], r) = [E']$ \\ if and only if
            $[E'] \in \mathsf{Solutions}(M)$ and $([E], [E']) \in \mathsf{Reactions}(M)$
    \end{enumerate}
\end{defn}

\begin{defn}
    Initialised models $M_1 = (\mathcal{M}_1, E_1)$ and $M_2 = (\mathcal{M}_2, E_2)$ are \emph{structurally equivalent}, 
    denoted $M_1 \equiv M_2$,
    if and only if $\mathsf{Solutions}(M_1) = \mathsf{Solutions}(M_2)$ and $\mathsf{Reactions}(M_1) = \mathsf{Reactions}(M_2)$.
\end{defn}

\begin{defn}
    Models $\mathcal{M}_1$ and $\mathcal{M}_2$ are \emph{structurally equivalent}, 
    denoted $\mathcal{M}_1 \equiv \mathcal{M}_2$,
    if and only if $\forall E_i \in \mathcal{E}_{\mathcal{M}_1} \cup \mathcal{E}_{\mathcal{M}_2}.(\mathcal{M}_1, E_i) \equiv (\mathcal{M}_2, E_i) $.
\end{defn}

In BNGL, agents are called molecules and they are specified in a similar manner as in the simplified Kappa.
An example of a molecule is \spec{A(x\~n!1)} where the site \spec{x} has an internal state \spec{n} (separated from the site by a tilde) and a binding state is \spec{1} (separated by the exclamation mark). 
The BNGL alternatives to agent signatures are called molecule types they are defined using the notation demonstrated in the following example: \spec{A(x\~n\~b, y\~n\~a)}.
Here, the allowed internal states of the individual sites are separated by tildes (site \spec{x} can have an internal state \spec{n} or \spec{b}).
Rules are described by the $lhs$ \spec{->} $rhs$ notation (or $lhs$ \spec{<->} $rhs$ in the case of reversible rules).
The individual model components (molecule types, reaction rules, seed species, observables) are in BNGL separated by the \spec{begin} \emph{keyword} and \spec{end} \emph{keyword} pairs.

\section{Model Reductions}

In this section, we formally define several syntactic operations that can be used to reduce rule-based models. In particular, we assume an original ODE model to be directly transferred to a rule-based model. At that level we apply syntactic reductions to remove redundancies of the original model. 
As a motivation the following example can be considered.
 \begin{exmp}
     \label{exmp-cee}
     Assume a comprehensive model of photosystem II protein complex containing the following rules:
 \[ P(Yz_0, P680_+, ChlD_0, Pheo_0) \rightarrow P(Yz_+, P680_0, ChlD_0, Pheo_0) \]
 \[ P(Yz_0, P680_+, ChlD_0, Pheo_-) \rightarrow P(Yz_+, P680_0, ChlD_0, Pheo_-) \]

 It is obvious that $Pheo$ is not affected by these rules in any way. 
 Since it has only two possible states ($0$ and $+$),
 the two rules can be reduced into the following single rule where $Pheo$ has an empty internal state:
 \[ P(Yz_0, P680_+, ChlD_0, Pheo) \rightarrow P(Yz_+, P680_0, ChlD_0, Pheo) \]
\end{exmp}

To capture the syntactic manipulation mentioned above, we introduce an operation called \emph{context enumeration elimination} of $Pheo$~in~$P$.
Note that it looks like we could eliminate $Pheo$ totally, but notice that it is unbound (i.e. its binding state is $\epsilon$).

\begin{defn}
    \label{cee-def}
    Model $\mathcal{M}_1$ and model $\mathcal{M}_2$ are in relation \emph{context enumeration elimination},
    ($(\mathcal{M}_1, \mathcal{M}_2) \in \rho_{cee}$)
, iff
    $\mathsf{Signature}(\mathcal{M}_1) = \mathsf{Signature}(\mathcal{M}_2) = (\Sigma, I)$
    and $\exists~A~\in~\mathcal{A}, x~\in~\mathcal{S}, \lambda \in \{\epsilon\} \cup \mathbb{N}, E_l,~E_r~\in~\mathcal{E},$ $\sigma_l,~\sigma_r~\in~\mathcal{E}_{\sigma}$ such that
\begin{enumerate}
\item  $ \mathsf{Rules}(\mathcal{M}_1) \setminus \mathsf{Rules}(\mathcal{M}_2) = \{(E_l^m, E_r^m) | m \in I(A,x)\}$
    where $\forall m \in I(A,x):\,$ $E_l^m \equiv E_l, A(\sigma_l, x^{\lambda}_m)\text{ and } E_r^m \equiv E_r, A(\sigma_r, x^{\lambda}_m)$,
\item $ \mathsf{Rules}(\mathcal{M}_2) \setminus \mathsf{Rules}(\mathcal{M}_1) = \{(E_l', E_r')\}$ where
    $ E_l' \equiv E_l, A(\sigma_l, x^{\lambda}_{\epsilon})\text{ and }
    E_r' \equiv E_r, A(\sigma_r, x^{\lambda}_{\epsilon})$.
\end{enumerate}
\end{defn}

\begin{thm}
    Context enumeration elimination preserves structural equivalence of models.
    If $(\mathcal{M}_1, \mathcal{M}_2) \in \rho_{cee}$ then $\mathcal{M}_1 \equiv \mathcal{M}_2$.
\end{thm}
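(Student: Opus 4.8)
The plan is to reduce the equality of state spaces to a local, per-transition statement about the two rule sets, and then propagate it by induction along the inductive definition of the state space (Definition~\ref{rn-def}). Fix an arbitrary initial expression $E_i \in \mathcal{E}_{\mathcal{M}_1} \cup \mathcal{E}_{\mathcal{M}_2}$; since the two models share the signature $(\Sigma,I)$, this set is exactly the complete expressions $\mathcal{E}_{(\Sigma,I)}$, so $E_i$ is complete for both models. I would first record the auxiliary fact that rule application preserves completeness: because $(\Sigma,I)\vdash R$ forces the right-hand side of every rule to assign each specified site an internal state in $I(A,\cdot)$, while the replacement axiom $n_{\iota}^{\lambda}[n^{\lambda_r}] = n_{\iota}^{\lambda_r}$ leaves an unspecified internal state untouched, no site of a complete solution can ever be lowered to the empty internal state $\epsilon$ by a rewrite. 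Hence every solution reachable from $[E_i]$ is complete, and I may assume completeness throughout.

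The core is the following local claim: for every complete expression $E$ and every target class $[E'']$, some distinguished rule $(E_l^m, E_r^m)$ of $\mathcal{M}_1$ produces the edge $[E] \to [E'']$ if and only if the single rule $(E_l', E_r')$ of $\mathcal{M}_2$ produces it. All remaining rules are shared (by Definition~\ref{cee-def}, $\mathsf{Rules}(\mathcal{M}_1)$ and $\mathsf{Rules}(\mathcal{M}_2)$ differ only in these distinguished rules), so they contribute identical edges and need no argument. For the distinguished rules I would split the claim into matching and replacement. For matching, I use that the context $E_l,\sigma_l$ is literally the same in $E_l^m$ and in $E_l'$, so a reordering $E' \in [E]$ matches the context identically in both cases and only the distinguished site $x$ of the singled-out agent $A$ differs. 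Here the two site-matching axioms do the work: if $E' \models E_l^m$ then the matched site is $x^{\lambda}_{m}$, and $x^{\lambda}_{m} \models x^{\lambda}_{\epsilon}$ (an instance of $n_{\iota}^{\lambda} \models n^{\lambda}$) gives $E' \models E_l'$; conversely, if $E' \models E_l'$ then by completeness that site carries some specific state $m \in I(A,x)$, and $x^{\lambda}_{m} \models x^{\lambda}_{m}$ yields $E' \models E_l^m$ for exactly the rule indexed by that $m$, which exists because the family in Definition~\ref{cee-def} is indexed by all of $I(A,x)$.

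For the replacement step I would check that the two rules produce the same expression on a matched $E'$. Since $E_l^m$ and $E_l'$ share their context and $E_r^m, E_r'$ share the context $E_r,\sigma_r$, the replacement acts identically everywhere except on the distinguished site $x$. There $\mathcal{M}_1$ applies $x^{\lambda}_{m}[x^{\lambda}_{m}] = x^{\lambda}_{m}$ while $\mathcal{M}_2$ applies $x^{\lambda}_{m}[x^{\lambda}_{\epsilon}] = x^{\lambda}_{m}$ via $n_{\iota}^{\lambda}[n^{\lambda_r}] = n_{\iota}^{\lambda_r}$; both leave the site as $x^{\lambda}_{m}$. Hence $E'[E_r^m] = E'[E_r']$ and the target classes coincide, completing the local claim. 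Finally I would feed this claim into an induction over the inductive generation of the state space: the two models start from the same class $[E_i]$ and, at every complete solution, admit exactly the same outgoing edges, so the least sets satisfying the clauses of Definition~\ref{rn-def} obey $\mathsf{Solutions}(M_1) = \mathsf{Solutions}(M_2)$ and $\mathsf{Reactions}(M_1) = \mathsf{Reactions}(M_2)$; thus $(\mathcal{M}_1, E_i) \equiv (\mathcal{M}_2, E_i)$ for every complete $E_i$, which is exactly $\mathcal{M}_1 \equiv \mathcal{M}_2$.

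I expect the main obstacle to be the backward direction of the matching equivalence, since it is the only place where completeness is essential: without it, a solution whose distinguished site sits at $\epsilon$ would be matched by the single rule $E_l'$ of $\mathcal{M}_2$ but by none of the enumerated rules $E_l^m$ of $\mathcal{M}_1$, breaking the equivalence. Making the completeness-preservation lemma precise (in particular that replacement never lowers a specified internal state to $\epsilon$) and threading structural equivalence carefully through the reordering of agents and sites and the renaming of binding states are the technical points that require the most care.
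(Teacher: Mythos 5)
Your proposal is correct and follows essentially the same route as the paper's proof: induction over the inductive generation of the state space, with the shared rules handled trivially and the distinguished rules compared via the two site-matching axioms (forward: $x^{\lambda}_{m}\models x^{\lambda}$; backward: pick the rule indexed by the matched $m\in I(A,x)$) together with the observation that both replacements leave the distinguished site as $x^{\lambda}_{m}$, so $e[E_r^m]=e[E_r']$. The only difference is that you state explicitly the lemma that reachable solutions remain complete, which the paper uses tacitly in its soundness direction when it asserts the matched site must carry some $m\in I(A,x)$; this is a worthwhile clarification but not a different argument.
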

\begin{proof}
    Let $(\mathcal{M}_1, \mathcal{M}_2) \in \rho_{cee}$. Then $\mathcal{E}_{\mathcal{M}_1} = \mathcal{E}_{\mathcal{M}_2}$.
    Let $E_i \in \mathcal{E}_{\mathcal{M}_1}$, $M_1 = (\mathcal{M}_1, E_i)$ and $M_2 = (\mathcal{M}_2, E_i)$.
    We prove that $M_1 \equiv M_2$ 
    by induction through the structure of their state spaces.
    Without a loss of generality we can fix the variables used in Definition~\ref{cee-def}.
    \begin{enumerate}
        \item  From Definition~\ref{rn-def}:
            $[E_i] \in \mathsf{Solutions}(M_1)$, $[E_i] \in \mathsf{Solutions}(M_2) $
        \item \emph{Completeness}:\\
            Let $[E] \in \mathsf{Solutions}(M_1)$ and $r \in \mathsf{Rules}(\mathcal{M}_1). \tau([E], r) = [E']$.\\
            From induction we have $[E] \in \mathsf{Solutions}(M_2)$.
            \begin{enumerate}
                \item $r \in \mathsf{Rules}(\mathcal{M}_1) \cap \mathsf{Rules}(\mathcal{M}_2)$.
                    Then we have $r \in \mathsf{Rules}(\mathcal{M}_2)$.
                \item $r \in \mathsf{Rules}(\mathcal{M}_1) \setminus \mathsf{Rules}(\mathcal{M}_2)$.
                    So $\exists m \in I(A, x). r = (E_l^m, E_r^m)$. \\
                    Let $r' = (E_l', E_r') \in \mathsf{Rules}(\mathcal{M}_2) \setminus \mathsf{Rules}(\mathcal{M}_1)$
                    and $e \in [E]. e \models E_l^m$. Then from Definition~\ref{matching-def} we have
                    $e \models E_l'$ and $e[E_r^m] = e[E_r']$. Therefore, $\tau([E], r') = [E']$.
            \end{enumerate}
            Thus  $\mathsf{Solutions}(M_1) \subseteq \mathsf{Solutions}(M_2)$,
            $\mathsf{Reactions}(M_1) \subseteq \mathsf{Reactions}(M_2)$.
            
        \item \emph{Soundness}:\\
            Let $[E] \in \mathsf{Solutions}(M_2)$ and $r \in \mathsf{Rules}(\mathcal{M}_2). \tau([E], r) = [E']$. \\
            From induction we have $[E] \in \mathsf{Solutions}(M_1)$.
            \begin{enumerate}
                \item $r \in \mathsf{Rules}(\mathcal{M}_2) \cap \mathsf{Rules}(\mathcal{M}_1)$.
                    Then we have $r \in \mathsf{Rules}(\mathcal{M}_1)$.
                \item $r \in \mathsf{Rules}(\mathcal{M}_2) \setminus \mathsf{Rules}(\mathcal{M}_1)$.
                    So $r = (E_l', E_r')$. Let $e \in [E]. e \models E_l'$.
                    Then there must be $x_m^\lambda$ in $e$ that gets matched to the $x_\epsilon^\lambda$ part $E_l'$.
                    It must be that $m \in I(A, x)$ and so $r' = (E_l^m, E_r^m) \in \mathsf{Rules}(\mathcal{M}_1) \setminus \mathsf{Rules}(\mathcal{M}_2)$.
                    From Definition~\ref{matching-def} we have $e \models E_l^m$ and $e[E_r^m] = e[E_r']$.
                    Therefore, $\tau([E], r') = [E']$.
            \end{enumerate}
            Thus  $\mathsf{Solutions}(M_2) \subseteq \mathsf{Solutions}(M_1)$,
            $\mathsf{Reactions}(M_2) \subseteq \mathsf{Reactions}(M_1)$.
    \end{enumerate}
\end{proof}

\begin{exmp}
     In Example \ref{exmp-cee} we did not remove $Pheo$ from
     \[ P(Yz_0, P680_+, ChlD_0, Pheo) \rightarrow P(Yz_+, P680_0, ChlD_0, Pheo) \]
     because it would make this rule applicable in solutions where it was not applicable before the change.
     The new rule
     \[ P(Yz_0, P680_+, ChlD_0) \rightarrow P(Yz_+, P680_0, ChlD_0) \]
     could be applied in some solutions where $Pheo$ is bound.
     Therefore, this operation does not preserve the structural semantics of every model. 
     However, if in an initialized model, $Pheo$ is not initialized in a bound state and there are no rules that would change it into a bound state
    then the result of this operation is equivalent with the original initialized model.
\end{exmp}

\begin{defn}
    Model $\mathcal{M}_1$ is in relation \emph{generic unbound context elimination} 
    with model $\mathcal{M}_2$, 
    $(\mathcal{M}_1, \mathcal{M}_2) \in \rho_{guce}$, iff
    $\mathsf{Signature}(\mathcal{M}_1) = \mathsf{Signature}(\mathcal{M}_2)$
    and $\exists A \in \mathcal{A}, x \in \mathcal{S}, E_l, E_r \in \mathcal{E}, \sigma_l, \sigma_r \in \mathcal{E}_{\sigma}$ such that
\begin{enumerate}
\item $\mathsf{Rules}(\mathcal{M}_1) \setminus \mathsf{Rules}(\mathcal{M}_2) = \{(E_l^1, E_r^1)\}$
    where $E_l^1 \equiv E_l, A(\sigma_l, x^\epsilon_\epsilon)$ and 
    $E_r^1 \equiv E_r, A(\sigma_r, x^\epsilon_\epsilon)$,
\item $ \mathsf{Rules}(\mathcal{M}_2) \setminus \mathsf{Rules}(\mathcal{M}_1) = \{(E_l^2, E_r^2)\}$
    where
    $E_l^2 \equiv E_l, A(\sigma_l)$ and $E_r^2 \equiv E_r, A(\sigma_r)$.
\end{enumerate}
\end{defn}

This reduction is useful in models where the removed contexts are guaranteed not to be bound. 

\begin{exmp}
     Models of photosynthesis can contain rules such as the following \emph{antenna deexcitation rule}:
     \[ P(P680_n, Qa_n,PhD_n,ac_*,ChlD_n) \rightarrow P(P680_n,Qa_n,PhD_n,ac_n,ChlD_n)\]
     We can see that this rule contains a lot of specific contexts -- sites with specified internal states whose internal state is not changed in the rule ($P680_n, Qa_n, PhD_n, ChlD_n$).
     We want to remove these contexts if they do not change the semantics of the model.
     We can define an operation that eliminates a single context from a rule.
     After the operation is applied to an initialized model we can check if the original and the resulting models have the same state space.
     \end{exmp}

Motivated by the previous example, we formally define the operation of removing a specific context from a model. 

\begin{defn}
    Model $\mathcal{M}_1$ is in relation \emph{specific unbound context elimination} 
    with model $\mathcal{M}_2$, 
    $(\mathcal{M}_1, \mathcal{M}_2) \in \rho_{suce}$, iff
    $\mathsf{Signature}(\mathcal{M}_1) = \mathsf{Signature}(\mathcal{M}_2)$
    and $\exists A \in \mathcal{A}, x \in \mathcal{S}, E_l, E_r \in \mathcal{E}, \sigma_l, \sigma_r \in \mathcal{E}_{\sigma}, m \in \mathbb{V}$ such that
\begin{enumerate}
\item $ \mathsf{Rules}(\mathcal{M}_1) \setminus \mathsf{Rules}(\mathcal{M}_2) = \{(E_l^1, E_r^1)\}$
    where
    $E_l^1 \equiv E_l, A(\sigma_l, x^\epsilon_m)$ and
    $E_r^1 \equiv E_r, A(\sigma_r, x^\epsilon_m)$,
\item $\mathsf{Rules}(\mathcal{M}_2) \setminus \mathsf{Rules}(\mathcal{M}_1) = \{(E_l^2, E_r^2)\}$
    where $E_l^2 \equiv E_l, A(\sigma_l)$ and $E_r^2 \equiv E_r, A(\sigma_r)$.
\end{enumerate}
\end{defn}

This reduction can be used if the set of reachable solutions in an intialized model to which this rule can be applied is not affected by the reduction.

Sometimes it may be useful to eliminate a rule from a model. 
Reasons for doing so can be different. 
For example, one might want to see how the behaviour of a model changes after the rule is removed.
Or if the rule is not reachable in an initialised model then it can be removed to reduce the size of the model description.
\begin{defn} 
    Model $\mathcal{M}_1$ is in relation \emph{rule elimination} 
    with model $\mathcal{M}_2$,
    denoted $(\mathcal{M}_1, \mathcal{M}_2) \in \rho_{re}$, if and only if
    $\mathsf{Signature}(\mathcal{M}_1) = \mathsf{Signature}(\mathcal{M}_2)$
    and $\exists r \in \mathcal{R}$ such that
    \scalebox{.95}{$\mathsf{Rules}(\mathcal{M}_1) \setminus \mathsf{Rules}(\mathcal{M}_2) = \{r\}$}
    and 
    \scalebox{.95}{$ \mathsf{Rules}(\mathcal{M}_2) \setminus \mathsf{Rules}(\mathcal{M}_1) = \emptyset$}.
\end{defn}

If a rule is not reachable in some initialised model then we can safely remove it without affecting the semantics of the initialised model.

\section{Application to Photosynthesis Models} 

In this section, we describe the application of the reductions to several models of photosynthesis.
If the model is reaction-based, we first rewrite it to rule-based form.

\subsection{Implementation}

We used the library PySB~\cite{pysb} for the specification of the photosynthetic models and to automatise their export to BNGL.
The scripts that implement syntactic operations have been written in Python.
We used BioNetGen for constructing and simulating the models. The scripts are available at
\url{https://github.com/jniznan/rbm-photosynthesis}.

We search the space of possible models that can be constructed by applying syntactic operations to the original model by depth-first search. We stop when we find a model that cannot be further reduced.
There can be multiple models that cannot be further reduced. Our algorithm finds only one.
We apply the syntactic operations in a given order: (i) context enumeration elimination, (ii) generic/specific context elimination, (iii) rule elimination. 
This approach is a heuristic that attempts to maximize the number of reductions.

\subsection{Photosynthesis}
Light-dependent reactions begin in photosystem II where the photons hit and excite the antenna molecules.
The excitation then travels via a chain of proteins until it arrives to chlorophyll \emph{a}.
Or a photon can directly excite chlorophyll \emph{a}.
This excitation causes the primary electron acceptor (pheophytin) to accept an electron from chlorophyll \emph{a} species called P680 -- the primary electron donor.
The electron is exchanged by multiple protein molecules until it reaches plastoquinone.
The electron missing from chlorophyll \emph{a} is replenished through a tyrosine residue from so called oxygen-evolving complex that strips electrons from water molecules, producing molecular oxygen and hydrogen protons into the lumen.

\begin{figure}[ht]
\begin{center}
\includegraphics[width=0.7\textwidth]{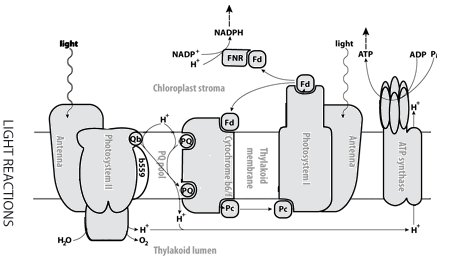}
\caption{Schema of the thylakoid membrane where light-dependent reactions occur.
}
\label{fig:thylakoid}
\end{center}
\end{figure}

After plastoquinone accepts two electrons, it is converted to its \ce{PQH2} form by accepting two hydrogen protons from the chloroplast stroma.
Then it travels to cytochrome \ce{b6f} where it is converted back to its original \ce{PQ} form, leaving the two electrons in the cytochrome and the two hydrogen protons in the lumen.
After that, plastoquinone returns to photosystem II, ready to accept other electrons.
As is shown in Figure~\ref{fig:thylakoid}, the electrons travel through plastocyanin, photosystem I, ferredoxin to ferredoxin-NADP reductase where they are used for converting \ce{NADP+} to \ce{NADPH}.
The process of the electrons travelling from the oxygen-evolving complex to the ferredoxin-NADP reductase is known as the \emph{Z-scheme} of light.
Hydrogen protons that are left in the lumen are pumped back into the chloroplast stroma by powering ATP synthase which uses that energy to convert ADP into ATP.


\subsection{Comprehensive Model of Photosystem II}
Photosynthesis is well-adapted for spatially and temporally heterogeneous environments~\cite{nedbal2007}.
The light conditions under which photosynthesis occurs are constantly changing for many reasons,
including day-night cycle, moving clouds or changing seasons.
Mechanisms lying behind the adaptability of photosynthesis are not yet fully understood~\cite{rascher2006}.
Evidence confirms that photosynthesis in fluctuating light is more dynamic than simply adapting to the light extremes.

Rules are used to informally capture the model, but no existing formal rule-based language is employed.
This fact makes the CMS a perfect candidate for rewriting it into an RBM format.
Since the reaction rates are not available the model is worked out at a qualitative level of view.


The original model contains 22 rules.
By applying several syntactic operations we are able to reduce the model size to just 17 rules with smaller contexts.

We have also considered a reduced variant of the model that concentrates on  light absorption by electron transfers inside PSII. That allowed the authors~\cite{holzwarth2006} to introduce a fully specified kinetic model. By employing the reductions we were able to compact the model rules but not to decrease their number. 

\subsubsection{Reduction Procedure}

In an idealized case, PSII consists of 9 principal components with several possible states: Peripheral antenna (ground, excited), Core antenna (ground, excited), Mn Cluster ($S_0$, $S_1$, $S_2$, $S_3$, $S_4$), $Y_z$ (neutral, oxidized), P680 (neutral, oxidized), ChlD (ground, excited, oxidized), Pheo (neutral, reduced), $Q_A$ (neutral, reduced), and $Q_B$ (neutral, reduced, 2-reduced, $PQH_2$, empty).

When modelling PSII using chemical reactions or ODEs, each state in the model is a combination of the states of its components.
This leads to 4800 possible combinatorial states.
In~\cite{nedbal2007}, the authors argue that this combinatorial complexity makes a highly comprehensive model, that uses all these components and states, practically impossible.
Such model could contain $4800 \times 4800 = 23.04 \times 10^6$ reactions. 
This complexity would get larger if this comprehensive model were to include other models of similar complexity.
Instead of trying to build this large model, the authors~\cite{nedbal2007} reduce the system to model the behaviour they are interested in.

The system is reduced to model the effect of exposing PSII to ultra-short saturating flash of light on the fluorescence emission~\cite{nedbal2007}. 
The model that the authors reduce is based on~\cite{holzwarth2000} and~\cite{holzwarth2006}.
They assume that the fluorescence detectors used in measurements are not fast enough to capture processes occurring during the ultra-short flash of light.
Therefore, they are explicitly using an initial state of the model with antenna already excited.
Due to the choice of the modelled behaviour they do not consider the peripheral components of PSII (the Mn cluster, $Y_z$ and $Q_B$).
Another type of reduction they demonstrate is the biochemical removal of the peripheral antenna, leaving only core antenna and chlorophyll donor to hold the excitation.

The reduced model contains only the following 5 components leading to 48 possible combinatorial states: Core antenna (ground, excited), P680 (neutral, oxidized), ChlD (ground, excited, oxidized), Pheo (neutral, reduced), and $Q_A$ (neutral, reduced).
The following is the rule-based BNGL representation of the model:\\
{\small
\begin{verbatim}
begin molecule types
  PSII(P680~n~p,Qa~n~m,PhD1~n~m,ac~n~exc,ChlD1~n~p~exc)
end molecule types

begin reaction rules
  #deexcitation_antenna_fluorescence:        
  PSII(P680~n,Qa~n,PhD1~n,ac~exc,ChlD1~n) -> PSII(P680~n,Qa~n,PhD1~n,ac~n,ChlD1~n)    
  #deexcitation_antenna_heat:                
  PSII(P680~n,Qa~n,PhD1~n,ac~exc,ChlD1~n) -> PSII(P680~n,Qa~n,PhD1~n,ac~n,ChlD1~n)   
  #deexcitation_ChlD1_fluorescence:          
  PSII(P680~n,Qa~n,PhD1~n,ac~n,ChlD1~exc) -> PSII(P680~n,Qa~n,PhD1~n,ac~n,ChlD1~n)   
  #deexcitation_ChlD1_heat:                  
  PSII(P680~n,Qa~n,PhD1~n,ac~n,ChlD1~exc) ->  PSII(P680~n,Qa~n,PhD1~n,ac~n,ChlD1~n)    
  #excitation_transfer:                      
  PSII(P680~n,Qa~n,PhD1~n,ac~exc,ChlD1~n) <-> PSII(P680~n,Qa~n,PhD1~n,ac~n,ChlD1~exc)   
  #primary_charge_separation_recombination:  
  PSII(P680~n,Qa~n,PhD1~n,ac~n,ChlD1~exc) <-> PSII(P680~n,Qa~n,PhD1~m,ac~n,ChlD1~p)  
  #stable_pair_generation_degeneration:   
  PSII(P680~n,Qa~n,PhD1~m,ac~n,ChlD1~p) <-> PSII(P680~p,Qa~n,PhD1~m,ac~n,ChlD1~n)    
  #quinone_qa_reduction_oxidation:           
  PSII(P680~p,Qa~n,PhD1~m,ac~n,ChlD1~n) <-> PSII(P680~p,Qa~m,PhD1~n,ac~n,ChlD1~n)   
end reaction rules
\end{verbatim}
}


Many of the contexts in the rules could be removed.
We therefore automatically apply a series of \emph{specific unbound context eliminations} to get a reduced model. The result is the following:
{\small \begin{verbatim}
begin reaction rules
  #deexcitation_antenna_fluorescence:        
  PSII(ac~exc) -> PSII(ac~n)  
  #deexcitation_antenna_heat:                
  PSII(ac~exc) -> PSII(ac~n)   
  #deexcitation_ChlD1_fluorescence:          
  PSII(ChlD1~exc) -> PSII(ChlD1~n)  
  #deexcitation_ChlD1_heat:                  
  PSII(ChlD1~exc) -> PSII(ChlD1~n)   
  #excitation_transfer:                      
  PSII(ac~exc,ChlD1~n) <-> PSII(ac~n,ChlD1~exc)   
  #primary_charge_separation_recombination:  
  PSII(PhD1~n,ChlD1~exc) <-> PSII(PhD1~m,ChlD1~p)   
  #stable_pair_generation_degeneration:      
  PSII(P680~n,PhD1~m,ChlD1~p) <-> PSII(P680~p,PhD1~m,ChlD1~n)    
  #quinone_qa_reduction_oxidation:       
  PSII(Qa~n,PhD1~m,ChlD1~n) <-> PSII(Qa~m,PhD1~n,ChlD1~n)    
end reaction rules
\end{verbatim}
}

\emph{Specific unbound context elimination} does not decrease the number of rules in the model.
However, we can see that the rules become much more compact.
The increased clarity of the rules makes it easy to see their effects.

In the \emph{stable pair generation/degeneration} rule, one may notice the \spec{PhD1\~m} specified in the context. 
Similarly, in the \emph{quinone Qa reduction/oxidation} rule, \spec{ChlD1\~n} is specified in the context.
Next, we take a look on how the behavior of the model changes if we try to remove one of these contexts.

\begin{figure}[ht]
\begin{center}
\includegraphics[width=\textwidth]{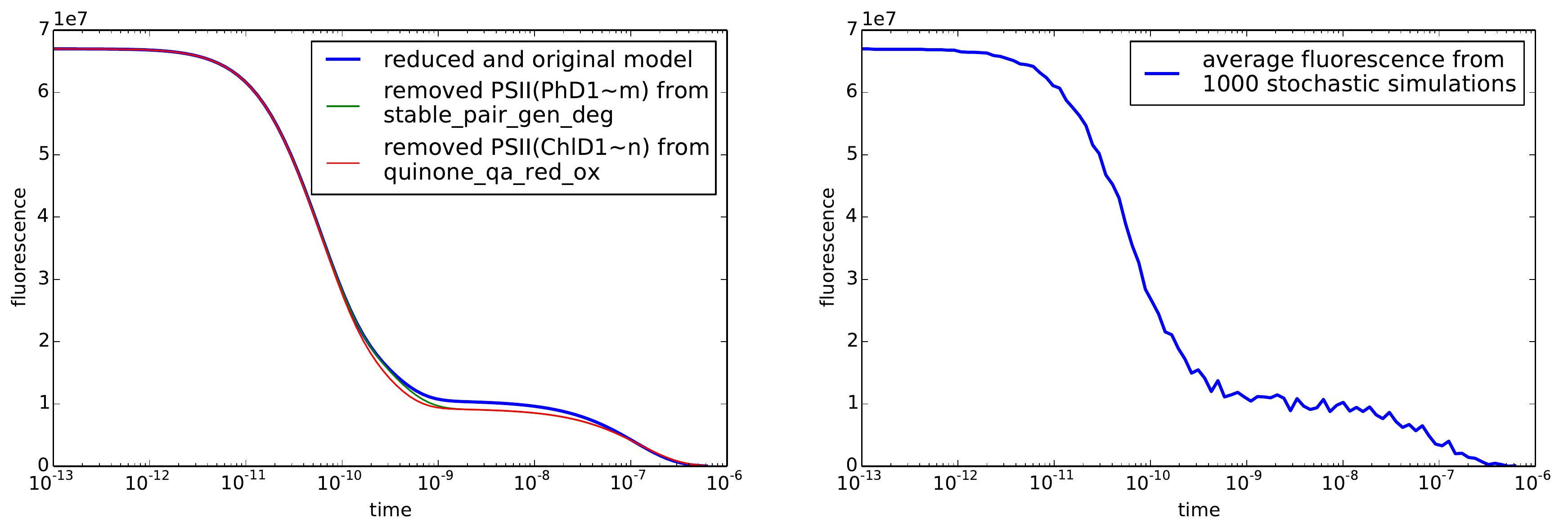}
\caption{ On the left: Fluorescence curve from the reduced model (blue) and the model where we manually remove also the context \spec{PhD1\~m} (green) or \spec{ChlD1\~m} (red). 
On the right: Fluorescence curve obtained by averaging 1000 stochastic simulations of the reduced model. The curves show fluorescence absorption time-course in relative units.
 }
\label{fig:holzwarth_rbm_sim}
\end{center}
\end{figure}

Figure~\ref{fig:holzwarth_rbm_sim} (left) compares fluorescence curves from several models.
The blue line shows fluorescence for the original model. 
This curve is the same for the reduced model we get after automatically reducing the model.
The green curve shows the changes that occur after we remove \spec{PhD1\~m} from the \emph{stable pair generation/degeneration} rule. We see that the fluorescence level is slightly lower on the plateau in the middle of the chart.
Similar situation occurs when we remove \spec{ChlD1\~n} from the context of \emph{quinone Qa reduction/oxidation}.
The resulting fluorescence levels are shown in red.

To illustrate why the fluorescence curves differ we compare the reaction networks of the reduced model and the reduced model with \spec{PhD1\~m} removed from context.
Figure~\ref{fig:holzwarth_rn} on the left shows the reaction network of the automatically reduced model. 
This reaction network is the same as the reaction network of the original model.
On the right, the reaction network of the additionally reduced model is shown.
We can see that this reaction network contains one more reachable agent that is not reachable in the original model.
This difference results in the different fluorescence curves as shown before.

\begin{figure}[ht]
\begin{center}
\includegraphics[width=\textwidth]{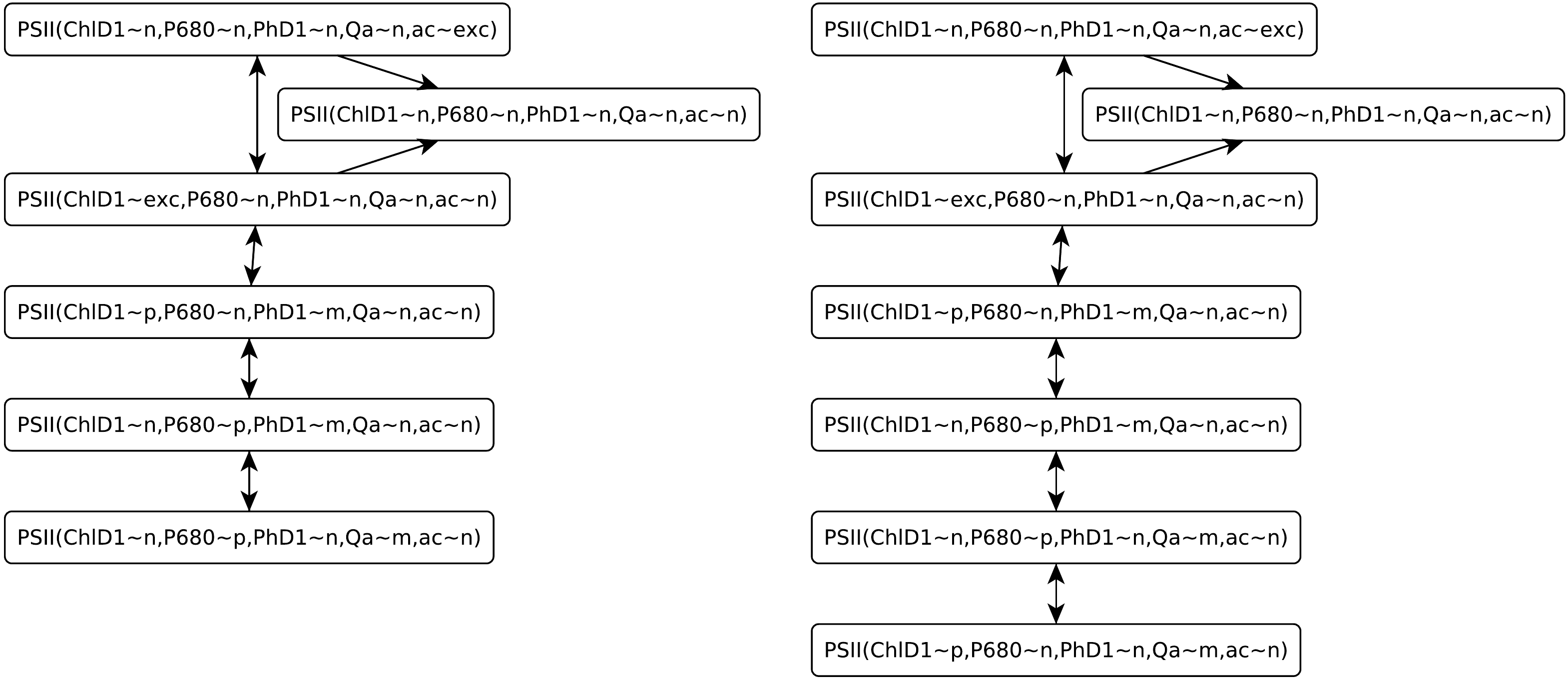}
\caption{ Reaction network of the reduced model (left) and of the reduced model with \spec{PhD1\~m} additionaly removed from the context (right).
 }
\label{fig:holzwarth_rn}
\end{center}
\end{figure}

The reduced model have the same reaction network as the original model for the initial condition \spec{PSII(P680~n,Qa~n,PhD1~n,ac~exc,ChlD1~n)}. 
For some other initial conditions, the original model would produce a reaction network consisting only of the initial solution, whereas the reduced model could produce a larger network.
This should not matter, since the authors~\cite{nedbal2007} make it explicit that the model should be used only with the specified initial condition.

\subsection{Integrated Model of Light-reactions}\label{lazar_cs}


Model Laz\'{a}r~\cite{lazar2009big} targets all important light-dependend thylakoid membrane processes participating in the \emph{Z-scheme} of light.
The model contains the following complexes and their parts:
photosystem II (with parts P680, Qa, Qb); oxygen evolving complex (with states $S_i$ where $i = 0, 1, 2, 3$); PQ,PQH - plastoquinone; cytochrome \ce{b6f} (with parts bL, bHc, f); Pc - plastocyanin; photosystem I (with parts P700, Fb); 
Fd - ferredoxin; ferredoxin-NADP reductase.
The model is expressed by 69 differential equations that can be browsed in detail at \url{http://www.e-photosynthesis.org}.
As we show later, this model has great potential for succint representation.


The rule-based reformulation of the full model in BNGL is the following:\\
{\small\begin{verbatim}
begin molecule types
  PSII(P680~n~p,Qa~n~m,Qb~n~m~2m)
  PQ()
  PQH()
  S(x~0~1~2~3)
  CytB6F(bL~n~m,bHc~n~m~2m,f~n~m)
  Fd(x~n~m)
  Pc(x~n~p)
  PSI(P700~n~p,Fb~n~m)
  FNR(x~i~a~am~a2m)
end molecule types

begin reaction rules
  #electron_transport_from_cytochrome_to_oxidized_Pc_1:  
  CytB6F(bL~n,bHc~n,f~m) + Pc(x~p) <-> CytB6F(bL~n,bHc~n,f~n) + Pc(x~n)
  #electron_transport_from_cytochrome_to_oxidized_Pc_2:  
  CytB6F(bL~m,bHc~n,f~m) + Pc(x~p) <-> CytB6F(bL~m,bHc~n,f~n) + Pc(x~n)
  #electron_transport_from_cytochrome_to_oxidized_Pc_3:  
  CytB6F(bL~n,bHc~m,f~m) + Pc(x~p) <-> CytB6F(bL~n,bHc~m,f~n) + Pc(x~n) 
  #electron_transport_from_cytochrome_to_oxidized_Pc_4:  
  CytB6F(bL~m,bHc~m,f~m) + Pc(x~p) <-> CytB6F(bL~m,bHc~m,f~n) + Pc(x~n)  
  #electron_transport_from_cytochrome_to_oxidized_Pc_5:  
  CytB6F(bL~n,bHc~2m,f~m) + Pc(x~p) <-> CytB6F(bL~n,bHc~2m,f~n) + Pc(x~n)
  #electron_transport_from_cytochrome_to_oxidized_Pc_6:  
  CytB6F(bL~m,bHc~2m,f~m) + Pc(x~p) <-> CytB6F(bL~m,bHc~2m,f~n) + Pc(x~n)
  #electron_transport_from_reduced_Fd_to_cytochrome_1:   
  Fd(x~m) + CytB6F(bL~n,bHc~n,f~n) <-> Fd(x~n) + CytB6F(bL~n,bHc~m,f~n) 
  #electron_transport_from_reduced_Fd_to_cytochrome_2:   
  Fd(x~m) + CytB6F(bL~m,bHc~n,f~n) <-> Fd(x~n) + CytB6F(bL~m,bHc~m,f~n)   
  #electron_transport_from_reduced_Fd_to_cytochrome_3:   
  Fd(x~m) + CytB6F(bL~n,bHc~n,f~m) <-> Fd(x~n) + CytB6F(bL~n,bHc~m,f~m)   
  #electron_transport_from_reduced_Fd_to_cytochrome_4:   
  Fd(x~m) + CytB6F(bL~m,bHc~n,f~m) <-> Fd(x~n) + CytB6F(bL~m,bHc~m,f~m)   
  #electron_transport_from_reduced_Fd_to_cytochrome_5:   
  Fd(x~m) + CytB6F(bL~n,bHc~m,f~n) <-> Fd(x~n) + CytB6F(bL~n,bHc~2m,f~n)  
  #electron_transport_from_reduced_Fd_to_cytochrome_6:
  Fd(x~m) + CytB6F(bL~m,bHc~m,f~n) <-> Fd(x~n) + CytB6F(bL~m,bHc~2m,f~n)  
  #electron_transport_from_reduced_Fd_to_cytochrome_7:
  Fd(x~m) + CytB6F(bL~n,bHc~m,f~m) <-> Fd(x~n) + CytB6F(bL~n,bHc~2m,f~m)  
  #electron_transport_from_reduced_Fd_to_cytochrome_8:
  Fd(x~m) + CytB6F(bL~m,bHc~m,f~m) <-> Fd(x~n) + CytB6F(bL~m,bHc~2m,f~m)  
  #electron_transport_inside_cytochrome_1:
  CytB6F(bL~m,bHc~n,f~n) <-> CytB6F(bL~n,bHc~m,f~n)   
  #electron_transport_inside_cytochrome_2:
  CytB6F(bL~m,bHc~n,f~m) <-> CytB6F(bL~n,bHc~m,f~m)   
  #electron_transport_inside_cytochrome_3:
  CytB6F(bL~m,bHc~m,f~n) <-> CytB6F(bL~n,bHc~2m,f~n)  
  #electron_transport_inside_cytochrome_4:
  CytB6F(bL~m,bHc~m,f~m) <-> CytB6F(bL~n,bHc~2m,f~m)  
  #electron_transport_from_reduced_PQ_to_cytochrome_1:
  PQH() + CytB6F(bL~n,bHc~n,f~n) <-> PQ() + CytB6F(bL~m,bHc~n,f~m)   
  #electron_transport_from_reduced_PQ_to_cytochrome_2:
  PQH() + CytB6F(bL~n,bHc~m,f~n) <-> PQ() + CytB6F(bL~m,bHc~m,f~m)    
  #electron_transport_from_reduced_PQ_to_cytochrome_3:
  PQH() + CytB6F(bL~n,bHc~2m,f~n) <-> PQ() + CytB6F(bL~m,bHc~2m,f~m)  
  #electron_transport_from_cytochrome_to_oxidized_PQ_1:
  PQ() + CytB6F(bL~n,bHc~2m,f~n) <-> PQH() + CytB6F(bL~n,bHc~n,f~n)   
  #electron_transport_from_cytochrome_to_oxidized_PQ_2:
  PQ() + CytB6F(bL~m,bHc~2m,f~n) <-> PQH() + CytB6F(bL~m,bHc~n,f~n)   
  #electron_transport_from_cytochrome_to_oxidized_PQ_4:
  PQ() + CytB6F(bL~m,bHc~2m,f~m) <-> PQH() + CytB6F(bL~m,bHc~n,f~m)   
  #electron_transport_from_cytochrome_to_oxidized_PQ_3:
  PQ() + CytB6F(bL~n,bHc~2m,f~m) <-> PQH() + CytB6F(bL~n,bHc~n,f~m)   
  #electron_transport_from_Fd_to_FNR_1:
  FNR(x~a) + Fd(x~m) <-> FNR(x~am) + Fd(x~n)   
  #electron_transport_from_Fd_to_FNR_2:
  FNR(x~am) + Fd(x~m) <-> FNR(x~a2m) + Fd(x~n) 
  #electron_transport_from_photosystemI_to_Fd_1:
  Fd(x~n) + PSI(P700~n,Fb~m) <-> Fd(x~m) + PSI(P700~n,Fb~n)  
  #electron_transport_from_photosystemI_to_Fd_2:
  Fd(x~n) + PSI(P700~p,Fb~m) <-> Fd(x~m) + PSI(P700~p,Fb~n)  
  #activation_of_FNR:
  FNR(x~i) -> FNR(x~a)   
  #turnover_of_FNR:
  FNR(x~a2m) -> FNR(x~a)    
  #electron_donation_1_1:
  S(x~0) + PSII(P680~p,Qa~n,Qb~n) -> S(x~1) + PSII(P680~n,Qa~n,Qb~n)    
  #electron_donation_2_1:
  S(x~0) + PSII(P680~p,Qa~m,Qb~n) -> S(x~1) + PSII(P680~n,Qa~m,Qb~n)    
  #electron_donation_3_1:
  S(x~0) + PSII(P680~p,Qa~n,Qb~m) -> S(x~1) + PSII(P680~n,Qa~n,Qb~m)    
  #electron_donation_4_1:
  S(x~0) + PSII(P680~p,Qa~m,Qb~m) -> S(x~1) + PSII(P680~n,Qa~m,Qb~m)    
  #electron_donation_5_1:
  S(x~0) + PSII(P680~p,Qa~n,Qb~2m) -> S(x~1) + PSII(P680~n,Qa~n,Qb~2m)  
  #electron_donation_6_1:
  S(x~0) + PSII(P680~p,Qa~m,Qb~2m) -> S(x~1) + PSII(P680~n,Qa~m,Qb~2m)  
  #electron_donation_1_2:
  S(x~1) + PSII(P680~p,Qa~n,Qb~n) -> S(x~2) + PSII(P680~n,Qa~n,Qb~n)    
  #electron_donation_2_2:
  S(x~1) + PSII(P680~p,Qa~m,Qb~n) -> S(x~2) + PSII(P680~n,Qa~m,Qb~n)    
  #electron_donation_3_2:
  S(x~1) + PSII(P680~p,Qa~n,Qb~m) -> S(x~2) + PSII(P680~n,Qa~n,Qb~m)    
  #electron_donation_4_2:
  S(x~1) + PSII(P680~p,Qa~m,Qb~m) -> S(x~2) + PSII(P680~n,Qa~m,Qb~m)    
  #electron_donation_5_2:
  S(x~1) + PSII(P680~p,Qa~n,Qb~2m) -> S(x~2) + PSII(P680~n,Qa~n,Qb~2m) 
  #electron_donation_6_2:
  S(x~1) + PSII(P680~p,Qa~m,Qb~2m) -> S(x~2) + PSII(P680~n,Qa~m,Qb~2m) 
  #electron_donation_1_3:
  S(x~2) + PSII(P680~p,Qa~n,Qb~n) -> S(x~3) + PSII(P680~n,Qa~n,Qb~n)    
  #electron_donation_2_3:
  S(x~2) + PSII(P680~p,Qa~m,Qb~n) -> S(x~3) + PSII(P680~n,Qa~m,Qb~n)    
  #electron_donation_3_3:
  S(x~2) + PSII(P680~p,Qa~n,Qb~m) -> S(x~3) + PSII(P680~n,Qa~n,Qb~m)    
  #electron_donation_4_3:
  S(x~2) + PSII(P680~p,Qa~m,Qb~m) -> S(x~3) + PSII(P680~n,Qa~m,Qb~m)    
  #electron_donation_5_3:
  S(x~2) + PSII(P680~p,Qa~n,Qb~2m) -> S(x~3) + PSII(P680~n,Qa~n,Qb~2m)  
  #electron_donation_6_3:
  S(x~2) + PSII(P680~p,Qa~m,Qb~2m) -> S(x~3) + PSII(P680~n,Qa~m,Qb~2m)  
  #electron_donation_1_0:
  S(x~3) + PSII(P680~p,Qa~n,Qb~n) -> S(x~0) + PSII(P680~n,Qa~n,Qb~n)    
  #electron_donation_2_0:
  S(x~3) + PSII(P680~p,Qa~m,Qb~n) -> S(x~0) + PSII(P680~n,Qa~m,Qb~n)    
  #electron_donation_3_0:
  S(x~3) + PSII(P680~p,Qa~n,Qb~m) -> S(x~0) + PSII(P680~n,Qa~n,Qb~m)    
  #electron_donation_4_0:
  S(x~3) + PSII(P680~p,Qa~m,Qb~m) -> S(x~0) + PSII(P680~n,Qa~m,Qb~m)    
  #electron_donation_5_0:
  S(x~3) + PSII(P680~p,Qa~n,Qb~2m) -> S(x~0) + PSII(P680~n,Qa~n,Qb~2m)    
  #electron_donation_6_0:
  S(x~3) + PSII(P680~p,Qa~m,Qb~2m) -> S(x~0) + PSII(P680~n,Qa~m,Qb~2m)    
  #electron_transports_from_qa_to_qb_1:
  PSII(P680~p,Qa~m,Qb~n) <-> PSII(P680~p,Qa~n,Qb~m)   
  #electron_transports_from_qa_to_qb_2:
  PSII(P680~n,Qa~m,Qb~n) <-> PSII(P680~n,Qa~n,Qb~m)   
  #electron_transports_from_qa_to_reduced_qb_1:
  PSII(P680~p,Qa~m,Qb~m) <-> PSII(P680~p,Qa~n,Qb~2m)  
  #electron_transports_from_qa_to_reduced_qb_2:
  PSII(P680~n,Qa~m,Qb~m) <-> PSII(P680~n,Qa~n,Qb~2m)  
  #Qb_PQ_exchange_1:
  PQ() + PSII(P680~p,Qa~n,Qb~2m) <-> PQH() + PSII(P680~p,Qa~n,Qb~n)   
  #Qb_PQ_exchange_2:
  PQ() + PSII(P680~n,Qa~n,Qb~2m) <-> PQH() + PSII(P680~n,Qa~n,Qb~n)   
  #Qb_PQ_exchange_3:
  PQ() + PSII(P680~p,Qa~m,Qb~2m) <-> PQH() + PSII(P680~p,Qa~m,Qb~n)   
  #Qb_PQ_exchange_4:
  PQ() + PSII(P680~n,Qa~m,Qb~2m) <-> PQH() + PSII(P680~n,Qa~m,Qb~n)   
  #electron_donation_to_P700_by_Pc_1:
  Pc(x~n) + PSI(P700~p,Fb~n) <-> Pc(x~p) + PSI(P700~n,Fb~n)   
  #electron_donation_to_P700_by_Pc_2:
  Pc(x~n) + PSI(P700~p,Fb~m) <-> Pc(x~p) + PSI(P700~n,Fb~m)   
  #charge_separation_in_PSI:
  PSI(P700~n,Fb~n) -> PSI(P700~p,Fb~m)    
  #charge_separation_in_PSII_1:
  PSII(P680~n,Qa~n,Qb~n) -> PSII(P680~p,Qa~m,Qb~n)   
  #charge_separation_in_PSII_2:
  PSII(P680~n,Qa~n,Qb~m) -> PSII(P680~p,Qa~m,Qb~m)   
  #charge_separation_in_PSII_3:
  PSII(P680~n,Qa~n,Qb~2m) -> PSII(P680~p,Qa~m,Qb~2m) 
end reaction rules

\end{verbatim}
}


The model has many rules that are perfect candidates for reduction using \emph{context enumeration elimination}, such as
the following three \emph{charge separation} rules:
{\small\begin{verbatim}
PSII(P680~n,Qa~n,Qb~n) -> PSII(P680~p,Qa~m,Qb~n)  
PSII(P680~n,Qa~n,Qb~m) -> PSII(P680~p,Qa~m,Qb~m)   
PSII(P680~n,Qa~n,Qb~2m) -> PSII(P680~p,Qa~m,Qb~2m)
\end{verbatim}}
These rules can be reduced to the following single \emph{charge separation} rule:
{\small\begin{verbatim}
PSII(P680~n,Qa~n,Qb) -> PSII(P680~p,Qa~m,Qb)   
\end{verbatim}
}

After applying all \emph{context enumeration eliminations} we are able to reduce the model size from 69 rules down to just 22 rules.
Since the model is not using any binding sites, we can automatically apply \emph{generic unbound context eliminations} to further reduce the model.
The rule stated above is reduced to the following form:

{\small\begin{verbatim}
PSII(P680~n,Qa~n) -> PSII(P680~p,Qa~m)
\end{verbatim}
}

We managed to reduce the model significantly.
The original ideas of the author are much more obvious in this reduced model.
The reduced model is much easier to modify and extend than its original version.
 The resulting model is the following:\\
{\small\begin{verbatim}
begin reaction rules
  #electron_transport_from_Fd_to_FNR:  
    FNR(x~a) + Fd(x~m) <-> FNR(x~am) + Fd(x~n)   
    FNR(x~am) + Fd(x~m) <-> FNR(x~a2m) + Fd(x~n)  
  #activation_of_FNR:  
    FNR(x~i) -> FNR(x~a)
  #turnover_of_FNR:   
    FNR(x~a2m) -> FNR(x~a)
  #charge_separation_in_PSI:  
    PSI(P700~n,Fb~n) -> PSI(P700~p,Fb~m)  
  #electron_transport_from_photosystemI_to_Fd:
    Fd(x~n) + PSI(Fb~m) <-> Fd(x~m) + PSI(Fb~n)
  #electron_transports_from_qa_to_qb:      
    PSII(Qa~m,Qb~n) <-> PSII(Qa~n,Qb~m)
  #electron_transports_from_qa_to_reduced_qb:   
    PSII(Qa~m,Qb~m) <-> PSII(Qa~n,Qb~2m)
  #electron_transport_inside_cytochrome:    
    CytB6F(bL~m,bHc~n) <-> CytB6F(bL~n,bHc~m)
    CytB6F(bL~m,bHc~m) <-> CytB6F(bL~n,bHc~2m)
  #electron_transport_from_reduced_PQ_to_cytochrome:  
    PQH() + CytB6F(bL~n,f~n) <-> PQ() + CytB6F(bL~m,f~m) 
  #charge_separation_in_PSII:  
    PSII(P680~n,Qa~n) -> PSII(P680~p,Qa~m) 
  #electron_donation_to_P700_by_Pc: 
    Pc(x~n) + PSI(P700~p) <-> Pc(x~p) + PSI(P700~n)
  #electron_transport_from_cytochrome_to_oxidized_PQ:  
    PQ() + CytB6F(bHc~2m) <-> PQH() + CytB6F(bHc~n)  
  #electron_transport_from_reduced_Fd_to_cytochrome:     
    Fd(x~m) + CytB6F(bHc~n) <-> Fd(x~n) + CytB6F(bHc~m) 
    Fd(x~m) + CytB6F(bHc~m) <-> Fd(x~n) + CytB6F(bHc~2m)
  #electron_donation: 
    S(x~0) + PSII(P680~p) -> S(x~1) + PSII(P680~n)  
    S(x~1) + PSII(P680~p) -> S(x~2) + PSII(P680~n)  
    S(x~2) + PSII(P680~p) -> S(x~3) + PSII(P680~n)  
    S(x~3) + PSII(P680~p) -> S(x~0) + PSII(P680~n)  
  #Qb_PQ_exchange:         
    PQ() + PSII(Qb~2m) <-> PQH() + PSII(Qb~n)  
  #electron_transport_from_cytochrome_to_oxidized_Pc:
    CytB6F(f~m) + Pc(x~p) <-> CytB6F(f~n) + Pc(x~n) 
end reaction rules
\end{verbatim}
}

For the purpose of further analysis,the following observables computing the fluorescence emission are defined~\cite{lazar2009big}:
{\small\begin{verbatim}
begin observables
  Molecules Qa_m     PSII(Qa~m)
  Molecules PQ_n     PQ()
end observables
\end{verbatim}
}

The computed fluorescence signals are defined by the following assignements over the observables:
\begin{itemize}
    \item \emph{unquenched fluorescence}:
        \[F_{unq}(t) = \frac{0.45 Qa_m(t)}{1 - 0.55 Qa_m(t)}\]
    \item \emph{quenched fluorescence}:
        \[F_q(t) = \frac{F_{unq}(t)}{1 + ({1 \over 45} + {4 \over 63} Qa_m(t)) PQ_n(t)} \]
\end{itemize}


Figure~\ref{fig:lazar_rb_sim_ssa} shows signals obtained by averaging the results from 20 sto\-chastic simulations of the model.
We had to convert the deterministic rates to stochastic ones to obtain these results.
Details of this conversion can be seen in~\cite{thesis}.

\begin{figure}[ht]
\begin{center}
\includegraphics[width=.5\textwidth]{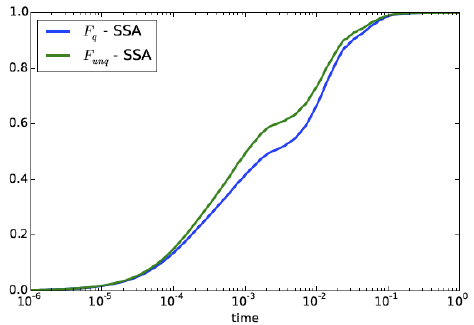}
\caption{ 
    Averaged signals computed from the outputs of 20 stochastic simulations of Laz\'{a}r's model. The curves show the time course of light emmission intensity in relative units.
 }
\label{fig:lazar_rb_sim_ssa}
\end{center}
\end{figure}

In the model presented above, a module of photosystem II including the oxygen evolving complex is considered. In~\cite{lazar2009approaches}, Laz\'{a}r ellaborates on several models of photosystem II with varying levels of complexity integrating the following components of PSII:



\begin{itemize}
\item Electron donor $P680$
\item $S_i$ states of the oxygen evolving complex (OEC) for $i = 0,1,2,3$.
\item Primary quinone electron acceptor $Q_a$
\item Secondary quinone electron acceptor $Q_b$
\item PQ (plastoquinone) pool
\end{itemize}

The measured output of the models is demonstrated in Figure~\ref{fig:lazar_ps2_sim} -- the fluorescence rise (FLR) in chlorophyll \emph{a}.

Laz\'{a}r considers several simplifications of the PSII module and describes their effect on the shape of the fluorescence signal:
\begin{enumerate}
\item The $S_i$-states of the OEC are considered as separated from PSII. 
    In BNGL (specifying only the relevant sites), instead of \spec{PSII(S\~0\~1\~2\~3)} we would have \spec{PSII()} and \spec{S(p\~0\~1\~2\~3)}.
    When describing the model using differential equations this simplification reduces the number of equations from 64 down to 20. 
    With this simplification the system turns into second order kinetics instead of first order kinetics. 
    Illustrated in BNGL: \\ \spec{PSII(P680\~p,S\~0) -> PSII(P680\~n,S\~1)} changes to\\ \spec{PSII(P680\~p) + S(p\~0) -> PSII(P680\~n) + S(p\~1)}

\item  Exchange of the double reduced $Q_b$ with a PQ molecule from the PQ pool is described by one second order reaction instead of two subsequent reactions. 
    In BNGL, instead of \\
\spec{PSII(Qb!1).PQ(Qb\~2m!1) <-> PSII(Qb) + PQ(Qb\~2m)} and \\
\spec{PSII(Qb) + PQ(Qb\~n) <-> PSII(Qb!1).PQ(Qb\~n!1)} \\
we have \\
\spec{PSII(Qb\~2m) + PQ(Qb\~n) <-> PSII(Qb\~n) + PQ(Qb\~2m)}. \\
This simplification makes the reading of the model a bit easier. 
By using this simplification along with the previous one the model is further reduced to 16 equations.

\item Only equations that follow a logical order are taken into account.
\end{enumerate}

These simplifications are combined into 8 models (two unique choices for each simplification -- as shown in Table~\ref{tb:models}). 
Different model formulations lead to different simulations of the fluorescence rise. Laz\'{a}r concludes that it is very important to care about how we formulate the model, because some simplifications may lead to completely unusual FLR behaviour. 
Models 7 and 8 resulted in a behaviour very similar to the O-J-I-P behaviour observed in \emph{in vivo} experiments.
We know that OEC is actually bound to PSII and that the process of reducing $PQ$ by $Q_b$ proceeds in two steps. 
These are two more arguments that proof the model 8 to be the most accurate representation of the real-world system.
As we present later, the size of model 8 is greatly reduced by rule-based modelling. 
Its size is even smaller than the size of the smallest model listed here.

\begin{table}[h]
\centering
{\small
\tabcolsep=0.2cm
\begin{tabular}{lcccccccc}
\toprule
Model no.:   &  1  &  2  &  3  &  4  &  5  &  6  &  7  &  8  \\
\midrule
OEC separated from PSII   & yes & yes & yes & yes &  no &  no &  no &  no \\
PQ exchange by 1 reaction & yes & yes &  no &  no & yes & yes &  no &  no \\
Only logical equations    &  no & yes &  no & yes & yes &  no & yes &  no \\
\bottomrule
\end{tabular}}
\caption{ Model variants of PSII considered by Laz\'{a}r.}
\label{tb:models}
\end{table}

To demonstrate the advantage of rule-based modelling to easily modify and compose the individual model variants, we take the most complex model of PSII Laz\'{a}r presented in~\cite{lazar2009approaches} (model 8) and we reformulate it in BNGL.   
Simulation of this model is consistent with O-J-I-P transient (see Figure~\ref{fig:lazar_ps2_sim}).
In its original form, this model consists of 64 differential equations that can be rewritten to following 10 rules:

{\small\begin{verbatim}
begin reaction rules
  # Light induced charge separation between P680 and Qa and charge recombination:
    PSII(P680~n,Qa~n) <-> PSII(P680~p,Qa~m)
  # Electron donation from S-states of OEC to P680+
    PSII(P680~p,S~0) -> PSII(P680~n,S~1)  
    PSII(P680~p,S~1) -> PSII(P680~n,S~2) 
    PSII(P680~p,S~2) -> PSII(P680~n,S~3) 
    PSII(P680~p,S~3) -> PSII(P680~n,S~0)
  # Electron transport from Qa- to Qb:
    PSII(Qa~m,Qb~red!1).PQ(Qb~n!1) <-> PSII(Qa~n,Qb~red!1).PQ(Qb~m!1)  
    PSII(Qa~m,Qb~red!1).PQ(Qb~m!1) <-> PSII(Qa~n,Qb~red!1).PQ(Qb~2m!1) 
  # Exchange of doubly reduced Qb with oxidized PQ molecule 
  # from the PQ pool by two subsequent reversible reactions:
    PSII(Qb~red!1).PQ(Qb~2m!1) <-> 
      PSII(Qb~red) + PQ(Qb~2m) 
    PSII(Qb~red) + PQ(Qb~n) <-> 
      PSII(Qb~red!1).PQ(Qb~n!1)  
  # Reversible reoxidation of reduced PQ molecules from PQ pool:
    PQ(Qb~2m) <-> PQ(Qb~n)  
end reaction rules
\end{verbatim}
}

\begin{figure}[ht]
\begin{center}
\includegraphics[width=.6\textwidth]{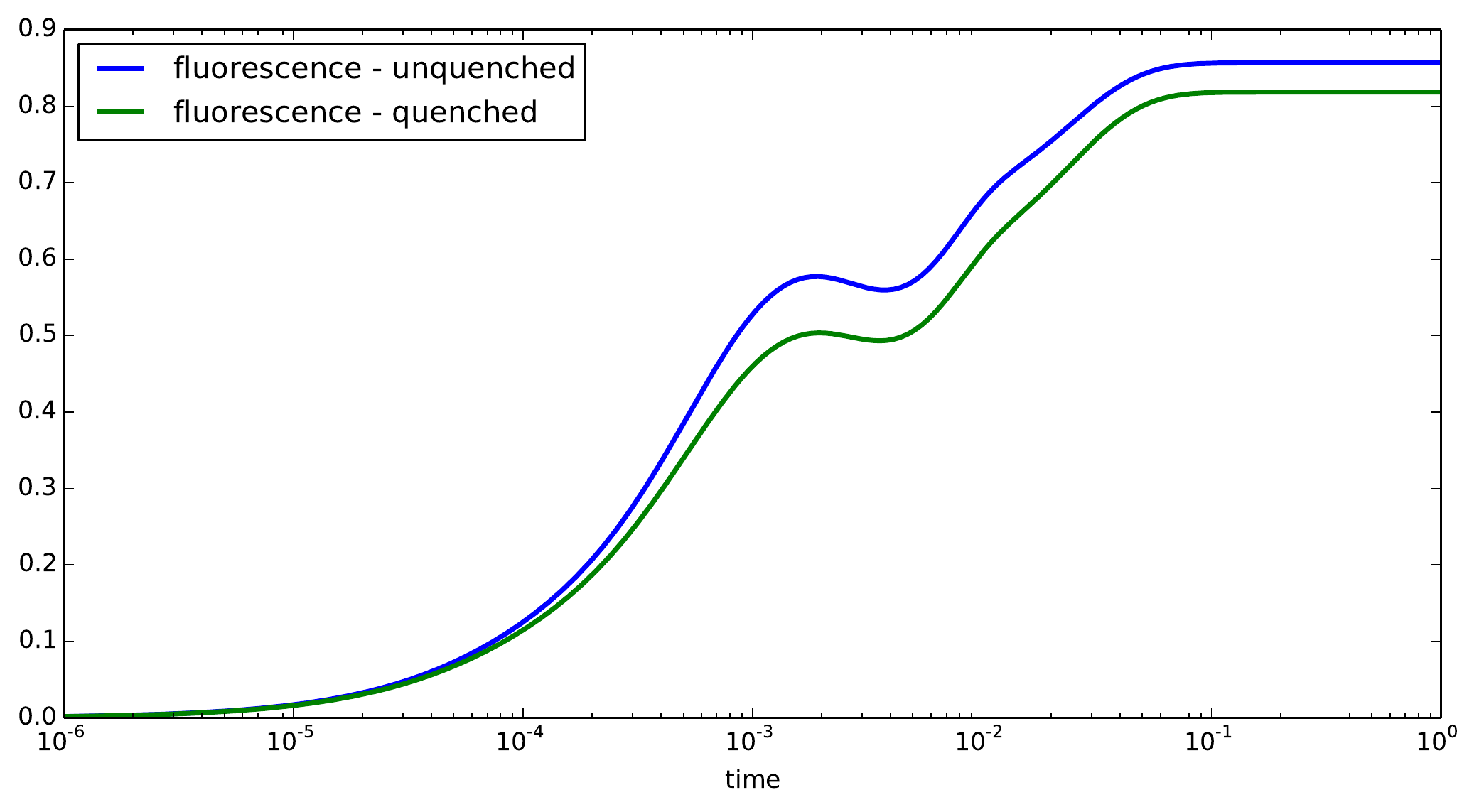}
\caption{ The fluorescence curves from the model of photosystem II. The curves show the time course of light emmission intensity in relative units.
 }
\label{fig:lazar_ps2_sim}
\end{center}
\end{figure}

Next, we replace the original photosystem II module in the integrated model with model 8.
This modification would be very hardly realisable when using the ODE approach to express the models.
The resulting model is listed below.

{\small\begin{verbatim}
begin molecule types
  PSII(P680~n~p,Qa~n~m,S~0~1~2~3,Qb~n~m)
  PQ(Qb~n~m~2m)
  CytB6F(bL~n~m,bHc~n~m~2m,f~n~m)
  Fd(x~n~m)
  Pc(x~n~p)
  PSI(P700~n~p,Fb~n~m)
  FNR(x~i~a~am~a2m)
end molecule types

begin reaction rules
  # Light induced charge separation/recombination between P680 and Qa:
    PSII(P680~n,Qa~n) <-> PSII(P680~p,Qa~m) 
  # Electron donation from S-states of OEC to P680+
    PSII(P680~p,S~0) -> PSII(P680~n,S~1)  
    PSII(P680~p,S~1) -> PSII(P680~n,S~2)  
    PSII(P680~p,S~2) -> PSII(P680~n,S~3)  
    PSII(P680~p,S~3) -> PSII(P680~n,S~0)  
  # Electron transport from Qa- to Qb:
    PSII(Qa~m,Qb~m!1).PQ(Qb~n!1) <-> PSII(Qa~n,Qb~m!1).PQ(Qb~m!1)   
    PSII(Qa~m,Qb~m!1).PQ(Qb~m!1) <-> PSII(Qa~n,Qb~m!1).PQ(Qb~2m!1)  
  # Exchange of doubly reduced Qb with oxidized PQ molecule 
  # from the PQ pool by two subsequent reversible reactions:
    PSII(Qb~m!1).PQ(Qb~2m!1) <-> PSII(Qb~m) + PQ(Qb~2m) 
    PSII(Qb~m) + PQ(Qb~n) <-> PSII(Qb~m!1).PQ(Qb~n!1)   
  # Electron transport from Fd to FNR:
    FNR(x~a) + Fd(x~m) <-> FNR(x~am) + Fd(x~n)  
    FNR(x~am) + Fd(x~m) <-> FNR(x~a2m) + Fd(x~n)
  # Activation of FNR:
      FNR(x~i) -> FNR(x~a) 
  # Turnover of FNR:
    FNR(x~a2m) -> FNR(x~a) 
  # Charge separation in PSI:  
    PSI(P700~n,Fb~n) -> PSI(P700~p,Fb~m) 
  # Electron transport from PSI to Fd:     
    Fd(x~n) + PSI(Fb~m) <-> Fd(x~m) + PSI(Fb~n) 
  # Electron transport inside cytochrome:    
    CytB6F(bL~m,bHc~m) <-> CytB6F(bL~n,bHc~2m) 
    CytB6F(bL~m,bHc~n) <-> CytB6F(bL~n,bHc~m) 
  # Electron transport from reduced PQ to cytochrome:  
    PQ(Qb~2m) + CytB6F(bL~n,f~n) <-> PQ(Qb~n) + CytB6F(bL~m,f~m)
  # Electron donation to P700 by Pc: 
    Pc(x~n) + PSI(P700~p) <-> Pc(x~p) + PSI(P700~n)
  # Electron transport from cytochrome to oxidized PQ:  
    PQ(Qb~n) + CytB6F(bHc~2m) <-> PQ(Qb~2m) + CytB6F(bHc~n)
  # Electron transport from reduced Fd to cytochrome:     
    Fd(x~m) + CytB6F(bHc~n) <-> Fd(x~n) + CytB6F(bHc~m)
    Fd(x~m) + CytB6F(bHc~m) <-> Fd(x~n) + CytB6F(bHc~2m)
  # Electron transport from cytochrome to oxidized Pc:
    CytB6F(f~m) + Pc(x~p) <-> CytB6F(f~n) + Pc(x~n)
end reaction rules
\end{verbatim}
}

In Figure~\ref{fig:ex_flu}, the outputs of the orginal and the combined model are compared.

\begin{figure}[ht]
\begin{center}
\includegraphics[width=\textwidth]{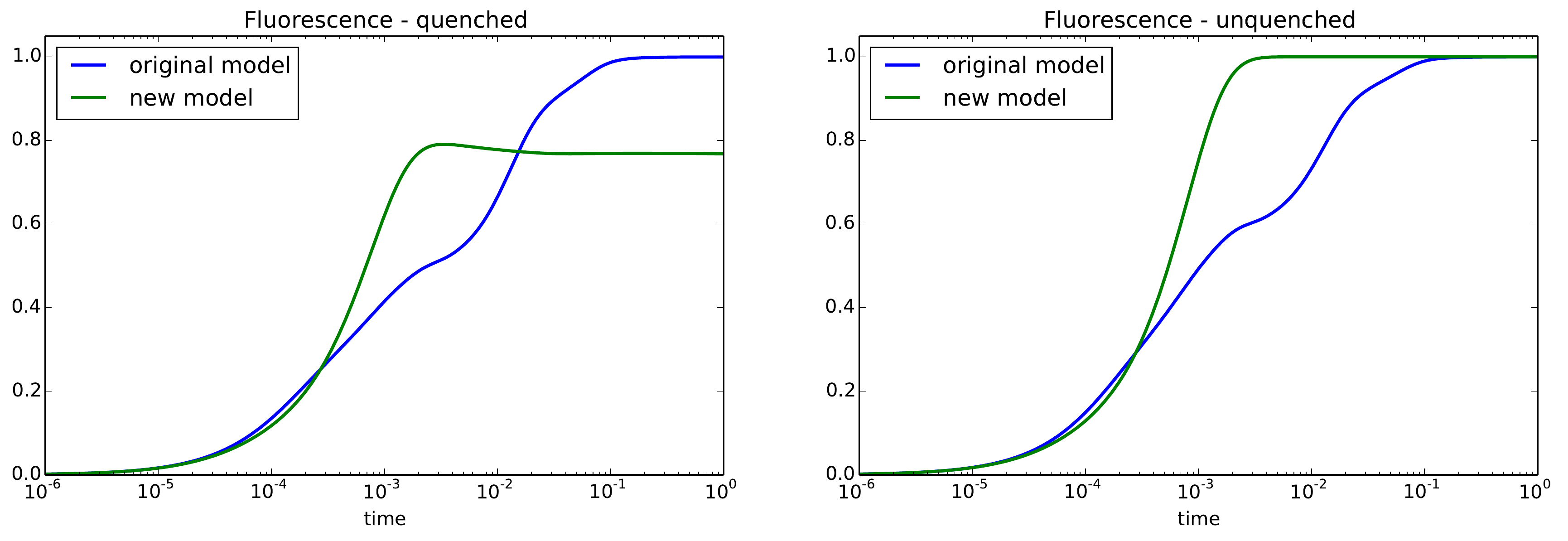}
\caption{ 
    Comparison of the fluorescence curves of the original and the combined model.
 }
\label{fig:ex_flu}
\end{center}
\end{figure}

\section{Conclusions}

We have demonstrated the unsuitability of traditional reaction-based modelling approaches for modelling complex biochemical processes, such as photosynthesis.
We explored existing models of photosynthesis and described the simplifications that were made in those models in order to battle the problem of combinatorial explosion.
We showed how these simplifications are undesirable.
Rule-based modelling allows us to compactly model the processes of photosynthesis in their full mechanistic complexity without the need for such simplifying assumptions.

We set on to naively reformulate selected representative models of photosynthesis as rule-based models.
These reformulated models were unnecessarily large, not exploiting the advantages of the rule-based format.
Therefore, we formally defined several intuitive syntactic operations that can be used to reduce the size of these models.
We provided a case study where we implemented these operations so they can be performed automatically and we managed to achieve large reductions in the size of the models.
The order in which we applied the reductions turned out to be satisfactory.

We believe that in the future, the communities of biologists who are modelling photosynthesis consider the use of rule-based modelling.
Rule-based modelling brings in many advantages and eliminates the reason of some artificial model simplifications.

\bibliographystyle{eptcs}
\bibliography{papers}

\end{document}